
\documentclass[12pt]{article}
\usepackage{footnote}
\makesavenoteenv{tabular}
\makesavenoteenv{table}

\usepackage[pass]{geometry}
\usepackage{tikz}
\usetikzlibrary{patterns}
\usetikzlibrary{arrows,automata}
\usepackage{graphicx}
\usepackage{amsmath}
\usepackage{amssymb}
\usepackage{amsthm}
\addtolength{\oddsidemargin}{-.875in}
	\addtolength{\evensidemargin}{-.875in}
	\addtolength{\textwidth}{1.75in}

	\addtolength{\topmargin}{-.875in}
	\addtolength{\textheight}{1.75in}

\newtheorem{theorem}{Theorem}[section]
\newtheorem{lemma}[theorem]{Lemma}

\newtheorem{obs}[theorem]{Observation}

\title{Small-World Formation via Local Information}

\author{Soroush Alamdari \\ {{\small \tt sorush@gmail.com}}}

\date{}

\begin{document}
\maketitle
\abstract{
It has been observed that almost anyone is acquainted with almost anyone else through only a few intermediary links. This has been known as the small-world phenomenon. In this script we investigate this observation from a theoretical stand-point by imagining each individual as a greedy agent. We show that small-world properties emerge naturally when individuals pay for links to manage a pull from the population.\footnote{The results in this paper were initially announced in the 35th ACM SIGACT-SIGOPS Symposium on Principles of Distributed Computing (PODC 2016).}

\section{The small-world phenomenon}
The drives that move us towards each other are far more complicated than to be mathematically understandable, and yet, there could be clues that may help us paint an abstract picture. For this, we focus on one particular observation, that is, the small-world phenomenon: The idea that any two people know each other through only a few links.

The small-world phenomenon was confirmed in an experiment in 1960s by Milgram \cite{milgram67smallworld},
who sent randomly selected individuals in Nebraska and Kansas a packet with instructions asking them to help the packet reach a certain businessman in Boston. In case they did not know the target individual personally, the recipients were instructed to pass the packet on to someone they knew personally who was most likely to personally know the target. So each initial recipient started a chain of correspondence, some of which ended before reaching the target due to lack of participation. Among the chains that reached the target, the average length of a chain was just below six. An observation we have been trying to understand since.

This is a curious phenomenon to the mathematician since it satisfies two opposing criteria:
\begin{enumerate}
\setcounter{enumi}{0}
\item \label{const:degree} \textbf{Degree:} The number of others that any node has links to, that is the \emph{degree} of that node, is small relative to the total population.
\item \label{const:chain} \textbf{Diameter:} 
    For any recipient and any target, the length of the shortest chain of links that starts from the recipient and ends in the target is very small. 
    The length of the longest such chain in a network is referred to as the \emph{diameter} of that network.
\end{enumerate}

In 1990s Watts and Strogatz \cite{watts1998cds} showed that in a society with a population of $n$ that are sitting around a ring, if each individual has links to the $n^\varepsilon$ others that are sitting on each side of it plus $n^\varepsilon$ random other nodes, then each pair of nodes are connected via a chain of expected length smaller than $1/\varepsilon$. While this simple model satisfies both Constraints \ref{const:degree} and \ref{const:chain}, it does not explain how such a chain is actually found by the forwarded packages in the experiment. There each recipient is unaware of the structure of this network and cannot decide who among the people it has a link to is on the shortest chain to the target. The fact that such short chains are actually observed by the experiment implies a much more restricting constraint than Constraint \ref{const:chain}. This observation allows us to strengthen Constraint \ref{const:chain}:
\begin{enumerate}
\setcounter{enumi}{2}
\item \label{const:private} \textbf{Limited information:} Each node is unaware of the links maintained by other nodes. We refer to this condition as the limited information constraint.
\item \label{const:gchain} \textbf{Routing diameter:}
For any recipient and any target, a very short chain connecting the recipient to the target can be found if each recipient chooses who to forward the packet to based on similarity with the target. We refer to the length of the longest such chain in a network as the \emph{routing diameter} of that network.
\end{enumerate}

\begin{table*}[t]
\centering
\begin{tabular}{|c|c|c|c|}
  \hline
   & $\alpha<d+1$ & $\alpha=d+1$ & $d+1<\alpha$\\
  \hline
  Maximum degree & $o( n^{1-\delta\frac{\alpha}{d+1}}\log \log n)$ & $O(\log n)$ & $O(\log_{\frac{\alpha}{d+1}} \log n)$ for $d=1$  \\
  \hline
   Routing diameter & $O(\log_{\frac{d+1}{\alpha}} \log n)$ & $O(\log n)$ & $o(n^{1-\frac{\delta}{\alpha-d}}\log\log n)$ \\
  \hline
\end{tabular}
\caption{The bounds on maximum degree and  routing diameter at stability for different values of $\alpha$ relative to the dimension $d$. Here $\delta$ can be any constant smaller than $1$.}
\label{tab:tradeoff}
\end{table*}

Travers and Milgram \cite{Travers69anexperimental} observed that participants favor geographical information when deciding the next recipient of the packet. Kleinberg \cite{Kleinberg:2000:STOC, Kleinberg:2000:nature} argues that such information is sufficient for satisfying Constraint \ref{const:gchain} by showing that in a population that is distributed homogeneously over the plane, if each node has a link to a node of geographical distance $l$ with probability $1/l^\alpha$, then iteratively choosing the next recipient based on geographical proximity to the target almost always produces a short chain. Particularly, when $\alpha=2$, both the expected length of the chain and the expected number of links that each node creates can be bounded by $O(\log n)$.

It remains curious whether the small-world phenomenon exists by coincidence, or if it is designed by our collective minds. Our goal here is to find strategies that may have produced networks of such properties:
\begin{enumerate}
\setcounter{enumi}{4}
\item \label{const:strategic} \textbf{Strategic choice:} Each node creates links by making choices toward some objective.
\end{enumerate}

Many game-theoretic models of network creation have been analyzed. However, by considering a model where individuals make choices strategically based on the choices of others, our Constraint \ref{const:private} is violated.

For example consider the model by Even-Dar and Kearns \cite{Even-DarK06}, where the population is distributed uniformly on the plane and each agent $v$ pays for links in order to reduce its average distance to
other nodes through the network. In their model a link to a node of geographical distance $l$ induces a cost of $l^\alpha$. Particularly, they show that in this model when $\alpha<2$, the diameter can be bounded by a constant. Calculating the objective function that each node works with requires knowledge of the whole network.

In another related work Clauset and Moore \cite{Clauset2003} show that if nodes rewire their outgoing link whenever they cannot reach another node with bounded number of steps, then small-world properties appear with bounds similar to what was observed by Kleinberg \cite{Kleinberg:2000:STOC, Kleinberg:2000:nature}. They show this via repeated simulations and while their model relaxes full-information assumption, it does not satisfy our Constraint \ref{const:private}.

When attempting to derandomize the work of Kleinberg \cite{Kleinberg:2000:STOC, Kleinberg:2000:nature} at the critical point where logarithmic bounds appear on both expected degree and routing diameter, we come upon a strategic model of network creation that achieves similar bounds and satisfies all of our constraints. We interpret the formula that governs the decision making of agents in the resulting model as a balancing of the cost of maintaining links and a desire for connectivity.

In Section \ref{sec:model} we present the model in detail and discuss the components of the cost function that the agents are trying to minimize.
In Section \ref{sec:analysis} we analyze the model on a $d$-dimensional grid\footnote{Throughout the paper we assume that $d$ is a constant.}.
We focus on a notion of equilibrium where agents cannot further improve their cost by adding or deleting a link, showing that $\alpha=d+1$ is a threshold where the upper bounds on the performance of the greedy routing and maximum degree match to $O(\log n)$.
We also analyze the maximum degree and the performance of greedy routing for other values of $\alpha$, particularly showing that when $\alpha<d+1$
the number of iterations that are required for greedy routing to find its destination can be bounded
by $O(\log_{\frac{d+1}{\alpha}} \log(n))$.

We conclude in Section \ref{sec:conclusion} by discussing the implications of our results.

\section{The Model}
\label{sec:model}
Consider a $d$-dimensional circular grid of $n$ nodes representing the underlying structure upon which the network is formed. For nodes $u$ and $v$ we use $d(u,v)$ to refer to the distance\footnote{In this paper, unless specified otherwise, by \textit{distance} we are referring to the distance through the grid, that is, grid-distance} of $u$ and $v$.
Also, for a node $u$ and a set of nodes $V$ we use $d(u,V)$ to refer to the distance of $u$ to its closest member in $V$, that is, $d(u,V) = \min_{v\in V} d(u,v)$.

A network $N$ assigns to each node $v$ a set $N(v)$ of nodes that $v$ has a links to.
To model the creation of such a network we introduce a cost that each agent incurs at a given setting.
Let $c_N(v)$ be such a function that calculates the cost induced to a node $v$ in $N$. There are two components to this cost function; the cost of maintaining the links that $v$ has to other nodes and the cost induced by what $v$ is missing regarding the nodes that it does not have a link to. For a node $u$, let $l_N(v,u)$ be the former and $s_N(v,u)$ be the latter. We have
$$c_N(v)=\sum_{u\in N(v)} l_N(v,u) + \sum_{u\notin N(v)} s_N(v,u).$$

For the cost that $v$ incurs for maintaining a link to an agent at distance $d(v,u)$, it is natural to assume a dependence between this cost and distance.
Let us define $l_N(v,u) = d^\alpha(v,u)$ for some number $\alpha \in \mathbb{R}$. The exact choice of $\alpha$ is discussed in section \ref{sec:analysis}. We will refer to this cost as the \emph{link-cost}.

It remains to define $s_N(v,u)$, that is, the cost that $v$ pays for a node $u$ that it has no link to.
We capture this by assuming that each node $u$ induces a cost to $v$ that is proportional to the distance between $u$ and the node closest to $u$ that $v$ has a link to,
that is, $d(u,N(v))$. We refer to this as the \emph{separation-cost}.

We are ready to express our cost function $c_N(v)$ such that it only depends on $N(v)$ and the position of the nodes on the underlying grid:
\begin{align}
\label{fml:maincost}
c_N(v) = \beta\sum_{u\in N(v)} d^\alpha(v,u) + \sum_{u\notin N(v)}d(u,N(v)).
\end{align}

\section{Analysis}
\label{sec:analysis}
In this section we analyze the network constructed by agents who seek to reduce their costs by adding and deleting links to other nodes. In particular, we discuss the trade-off between the greedy routing diameter of the network and the maximum degree among nodes.
When $\alpha < \frac{1}{\log_{\beta^{-1}} n}$ every link has a cost smaller than $1$ and therefore the diameter is $1$ and all of the degrees are equal to $n-1$, and when $\alpha > \log(\beta^{-1}n^2)$ no agent can justify a link to a node of distance $>1$, and therefore the degrees and diameter are $2d$ and $dn^{1/d}$ respectively. Here we provide analysis for intermediary values of $\alpha$.

To establish our bounds, we do not require the assumption that each agent chooses their links to minimize their cost. In each of our cases, the assumption that no agent can reduce its cost by adding a link (\emph{add-stability}) is sufficient to prove the upper bound
on the routing diameter of the network. When in addition to add-stability we also assume that no agent can improve its cost by removing a link (\emph{toggle-stability}) we are able to also prove upper bounds on the maximum degree. Since each agent reduces its cost with each of these additions and deletions of links, therefore, performing these operations in any order will eventually result in stability. It is also worth noting that these assumptions are much more general than, for example, the assumption of a Nash equilibrium.

We start with a set of observations on the properties of the grid that will be used throughout this section in order to bound the routing diameter and maximum degree among the nodes. For this, let us define $S(v,l)$ to be the set of all nodes of distance exactly $l$ from $v$, and $B(v,l)$ be the set of all nodes $u$ with $d(u,v)\leq l$. The following observation lists some properties regarding $S(v,l)$ and $B(v,l)$.

\begin{figure}
\begin{tikzpicture}[darkstyle/.style={circle,draw,fill=gray!40,minimum size=20}]
\begin{scope}[shift = {(0.5,0)},scale = 0.9]
	\node [darkstyle]  (v) at (0,8) {$v$};
	\draw (5,5) -- (8,8) -- (0,16);
	\draw (3,5) -- (6,8) -- (0,14);
	\draw [<->,>=stealth',shorten >=1pt, auto, semithick, scale = 1, transform shape] (v)  -- node[above] {$l$}  ++ (6,2);
	\draw [<->,>=stealth',shorten >=1pt, auto, semithick, scale = 1, transform shape] (v)  -- node[above left] {$l-l'$}  ++ (3,3);	
	\draw [blue] (6,8) -- node {} ++ (1,1);
	\draw [blue] (5,9) -- node {} ++ (1,1);
	\draw [blue] (4,10) -- node {} ++ (1,1);
	\draw [blue] (3,11) -- node {} ++ (1,1);
	\draw [blue] (2,12) -- node {} ++ (1,1);
	\draw [blue] (1,13) -- node {} ++ (1,1);	
	\draw [blue] (6,8) -- node {} ++ (1,-1);
	\draw [blue] (5,7) -- node {} ++ (1,-1);
\end{scope}
\begin{scope}[shift = {(10,0)}, scale = 0.9]
	\node [darkstyle]  (u) at (0,8) {$v$};
	\draw (5,5) -- (8,8) -- (0,16);
	\draw (3,5) -- (6,8) -- (0,14);
	\draw [<->,>=stealth',shorten >=1pt, auto, semithick, scale = 1, transform shape] (u)  -- node[above] {$l$}  ++ (6,2);
	\draw [<->,>=stealth',shorten >=1pt, auto, semithick, scale = 1, transform shape] (u)  -- node[above left] {$l-l''$}  ++ (3,3);	
	\draw [blue] (6,8) -- node {} ++ (1,1);
	\draw [blue] (5.5,8.5) -- node {} ++ (1,1);	
	\draw [blue] (5,9) -- node {} ++ (1,1);
	\draw [blue] (4.5,9.5) -- node {} ++ (1,1);
	\draw [blue] (4,10) -- node {} ++ (1,1);
	\draw [blue] (3.5,10.5) -- node {} ++ (1,1);
	\draw [blue] (3,11) -- node {} ++ (1,1);
	\draw [blue] (2.5,11.5) -- node {} ++ (1,1);
	\draw [blue] (2,12) -- node {} ++ (1,1);
	\draw [blue] (1.5,12.5) -- node {} ++ (1,1);	
	\draw [blue] (1,13) -- node {} ++ (1,1);	
	\draw [blue] (6,8) -- node {} ++ (1,-1);
	\draw [blue] (5.5,7.5) -- node {} ++ (1,-1);	
	\draw [blue] (5,7) -- node {} ++ (1,-1);	
	\draw [blue] (0,15) -- node {} ++ (7.5,-7.5);
	\draw [blue] (7.5,8.5) -- node {} ++ (-3.5,-3.5);
\end{scope}
\end{tikzpicture}
\caption{Items \ref{lem:item:DisEqkCover} (left) and \ref{lem:item:DisLeqkCover} (right) of Observation \ref{lem:basiclist}. In each case $l'$ is the diameter of the smaller squares.}
\end{figure}
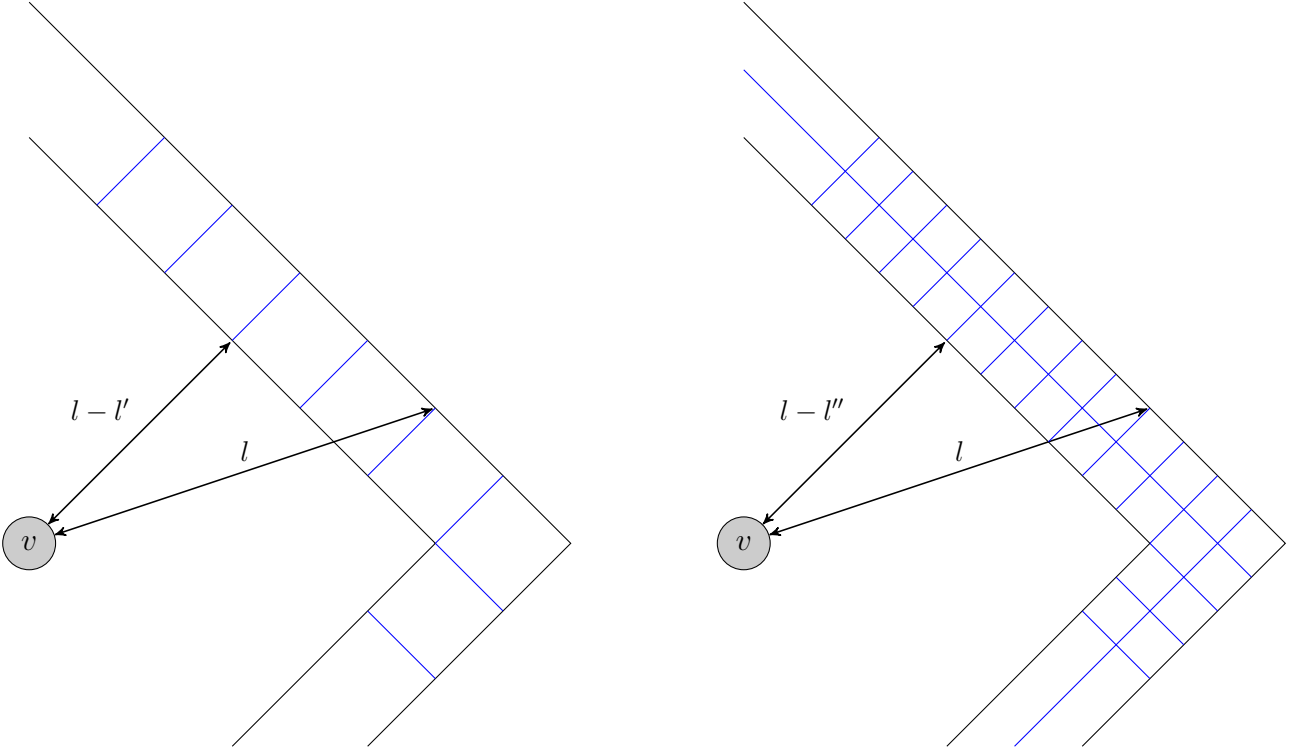

\begin{obs}
\label{lem:basiclist}
For any node $v$ on the $d$-dimensional grid and $l\in \mathbb{N}$ the following holds:
\begin{enumerate}
\item \label{lem:item:DisEqk} $|S(v,l)|\in \Theta(l^{d-1})$.
\item \label{lem:item:DisLeqk} $|B(v,l)|\in \Theta(l^{d})$.
\item \label{lem:item:DisEqkCover} The nodes in $B(v, l) \backslash B(v,l-l')$ can be covered with $O({(\frac{l}{l'})}^{d-1})$ balls of the form $B(u,l')$ with $u \in B(v,l)\backslash B(v,l-l')$.
\item \label{lem:item:DisLeqkCover} The nodes in $B(v, l)\backslash B(v,l-l'')$ can be covered by $O({(\frac{l}{l'})}^d)$ balls of the form $B(u,l')$ with $u \in B(v,l)\backslash B(v,l-l'')$.
\end{enumerate}
\end{obs}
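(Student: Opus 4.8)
The plan is to read the grid distance as the graph metric on the $d$-dimensional torus, which locally is the $\ell_1$ metric on $\mathbb{Z}^d$, so that $B(v,l)$ is an $\ell_1$-ball (a cross-polytope of radius $l$) and $S(v,l)$ is its bounding sphere. I would restrict attention to $l$ below a constant fraction of the side length $n^{1/d}$; in this range $B(v,l)$ does not wrap around and coincides with a genuine $\mathbb{Z}^d$ ball, and this is the only range the later degree and routing analysis needs. The first two items are then pure lattice-point counts. For $|S(v,l)|$ I would count the integer points with $\sum_i|x_i|=l$; grouping them by the set of nonzero coordinates and their signs expresses the count as a fixed polynomial in $l$ of degree $d-1$ with positive leading coefficient, giving $|S(v,l)|\in\Theta(l^{d-1})$. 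The volume bound then follows by summation, $|B(v,l)|=\sum_{k=0}^{l}|S(v,k)|=\Theta\!\bigl(\sum_{k=0}^{l}k^{d-1}\bigr)=\Theta(l^{d})$, or directly as the volume of the cross-polytope.

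For the two covering items I would isolate a single net/packing lemma and apply it twice. Given $A\subseteq B(v,l)$, let $P\subseteq A$ be a maximal subset whose points are pairwise at distance greater than $l'$. Maximality forces every point of $A$ to lie within distance $l'$ of some $u\in P$, so the balls $\{B(u,l')\}_{u\in P}$ cover $A$ and have their centers in $A$, exactly as the statement requires. The separation forces the balls $\{B(u,l'/2)\}_{u\in P}$ to be pairwise disjoint; each has volume $\Theta({l'}^{d})$ by the volume bound, and all of them sit inside the $l'/2$-neighborhood $N$ of $A$. Hence $|P|=O(\mathrm{vol}(N)/{l'}^{d})$, and it remains only to estimate $\mathrm{vol}(N)$ in each case.

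For the third item, $A=B(v,l)\setminus B(v,l-l')$ is a shell of thickness $l'$, so $N\subseteq B(v,l+l'/2)\setminus B(v,l-3l'/2)$. Writing $t=l'/l\in(0,1]$ and using the volume bound, $\mathrm{vol}(N)=\Theta\!\bigl(l^{d}\bigl[(1+t/2)^{d}-(1-3t/2)^{d}\bigr]\bigr)$, and the bracket is $O(t)$ for all $t\in(0,1]$ since both powers differ from $1$ by $O(t)$ (a Bernoulli/binomial estimate). Thus $\mathrm{vol}(N)=O(l^{d-1}l')$ and $|P|=O((l/l')^{d-1})$. For the fourth item I would discard the thickness entirely and bound the region crudely by the whole ball: $N\subseteq B(v,l+l'/2)$, so $\mathrm{vol}(N)=O(l^{d})$ and $|P|=O((l/l')^{d})$.

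The routine parts are the lattice-point count and the net/packing argument, both standard. The one step demanding genuine care is the thin-shell estimate in the third item: obtaining the exponent $d-1$ rather than $d$ hinges on the two-sided bound $(1+t/2)^{d}-(1-3t/2)^{d}=O(t)$ uniformly on $(0,1]$, i.e.\ that the shell's volume really is only $\Theta(l^{d-1}l')$ and not $\Theta(l^{d})$. I expect this to be the main obstacle, together with keeping the range restriction on $l$ honest so that the circular wrap-around does not spoil the $\Theta$-estimates.
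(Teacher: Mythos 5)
Your proof is correct, but note that the paper offers essentially no proof of this statement: it is presented as an \emph{Observation}, treated as a collection of self-evident grid facts, and the only guidance is a figure illustrating Items \ref{lem:item:DisEqkCover} and \ref{lem:item:DisLeqkCover} as an explicit tiling of the annulus $B(v,l)\setminus B(v,l-l')$ by small squares. The implied argument is therefore a direct constructive cover: place ball centers on a sub-lattice of spacing roughly $l'$ intersected with the shell and count them, which is elementary and exploits the grid structure directly. Your route is genuinely different and more abstract: lattice-point counting for Items \ref{lem:item:DisEqk} and \ref{lem:item:DisLeqk}, then a maximal-$l'$-separated net with a packing/volume comparison for the covering items. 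What this buys is generality (the net argument works in any metric space with the right volume growth, not just the grid) and rigor on two points the paper glosses over: the wrap-around restriction on $l$ (as literally stated, ``for any $l\in\mathbb{N}$'' fails on the torus once $l$ exceeds the diameter, since $S(v,l)$ is then empty) and the thin-shell volume bound $O(l^{d-1}l')$ that yields the exponent $d-1$ in Item \ref{lem:item:DisEqkCover}. One presentational caution on that last step: a difference of two $\Theta(\cdot)$ quantities is not $\Theta$ of the difference, so before subtracting you should invoke the sharp count $|B(v,r)| = c_d r^d + O(r^{d-1})$, which your polynomial lattice count does in fact provide; with that in hand, your mean-value/Bernoulli estimate closes the argument, since the leftover $O(l^{d-1})$ error is absorbed by $O(l^{d-1}l')$ for $l'\geq 1$.
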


\subsection{Balance: $\alpha = d+1$}
Here we focus on the case when $\alpha = d+1$ and show that both maximum degree and the iterations required for greedy routing to find its destination are bounded by $O(\log n)$.

Let us start with a useful lemma that provides a lower bound on the drop in the separation-cost when adding a link.
\begin{lemma}
\label{lem:CrstyDrop}
If a node $v$ has no link in $B(u,l)$, then adding a link to $u$ will decrease the total separation-cost of $v$ by $\Omega (l^{d+1})$.
\end{lemma}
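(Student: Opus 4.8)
The plan is to write the reduction in separation-cost explicitly and then lower-bound it by restricting attention to a ball of radius a constant fraction of $l$ centered at $u$. When $u$ is added to $N(v)$, every node $w\notin N(v)$ sees its contribution to the separation-cost change from $d(w,N(v))$ to $\min\bigl(d(w,N(v)),d(w,u)\bigr)$, so the total drop is
$$\sum_{w\notin N(v)}\Bigl(d(w,N(v))-\min\bigl(d(w,N(v)),d(w,u)\bigr)\Bigr),$$
a sum of nonnegative terms. The node $u$ itself also leaves the separation sum, contributing a further positive amount that I can safely discard. It therefore suffices to exhibit a large collection of nodes each of whose term is $\Omega(l)$.

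First I would restrict to $w\in B(u,l/3)$. For any such $w$, the hypothesis that $v$ has no link inside $B(u,l)$ means every $x\in N(v)$ satisfies $d(u,x)>l$; combined with $d(u,w)\le l/3$ and the triangle inequality this gives $d(w,x)\ge d(u,x)-d(u,w)>2l/3$ for every link $x$, hence $d(w,N(v))>2l/3$. On the other hand the new link provides $d(w,u)\le l/3$, so the per-node reduction is at least $2l/3-l/3=l/3=\Omega(l)$.

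To finish, I would count the nodes over which this bound holds. By item \ref{lem:item:DisLeqk} of Observation \ref{lem:basiclist}, $|B(u,l/3)|\in\Theta(l^d)$, and all of these nodes other than $u$ remain outside $N(v)\cup\{u\}$ and so stay in the separation sum. Multiplying the per-node reduction $\Omega(l)$ by the count $\Theta(l^d)$ yields a total drop of $\Omega(l^{d+1})$, as claimed.

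The only real tension, and the step I expect to require the most care, is the choice of radius: the old distance $d(w,N(v))$ is large only when $w$ stays well inside $B(u,l)$, whereas the new distance $d(w,u)$ is small only when $w$ lies close to $u$, and these pull in opposite directions. Any fixed fraction of $l$ bounded away from both $0$ and $l$ resolves the tension, and $l/3$ is merely a convenient choice. If one wanted the sharper $\Theta(l^{d+1})$ rather than just the lower bound, I would instead sum over the shells $S(u,i)$ for $0\le i\le l/2$, where a node at distance $i$ enjoys a reduction of at least $l-2i$ and $|S(u,i)|\in\Theta(i^{d-1})$ by item \ref{lem:item:DisEqk}, turning the estimate into $\sum_{i}\Theta(i^{d-1})(l-2i)=\Theta(l^{d+1})$.
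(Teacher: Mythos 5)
Your proposal is correct and follows essentially the same argument as the paper: restrict to the ball $B(u,l/3)$, note each node there had separation-cost at least $\frac{2}{3}l$ (since $v$ has no link in $B(u,l)$) and gains cost at most $\frac{1}{3}l$ after the link to $u$ is added, then multiply the per-node drop $\Omega(l)$ by the $\Theta(l^d)$ count from Item \ref{lem:item:DisLeqk} of Observation \ref{lem:basiclist}. Your version is somewhat more careful than the paper's (writing the sum explicitly, invoking the triangle inequality, and handling $u$'s own departure from the sum), but the decomposition and the key estimates are identical.
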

\begin{proof}
Consider the set of nodes $S$ with distance $l'$ from $u$ where $l' \leq \frac{1}{3}l$. Note that any node in $S$ induces a separation-cost of at least $\frac{2}{3} l$. If $v$ adds a link to $u$, then the separation-cost of each node in $S$ will be reduced to at most $\frac{1}{3} l$. Therefore, by Item \ref{lem:item:DisLeqk} of Lemma \ref{lem:basiclist} the total decrease in separation-cost is at least $c \frac{l}{3}^d \times \frac{l}{3} = c \frac{l}{3}^{d+1}$ which concludes the lemma.
\end{proof}

Consider an add-stable node $v$, that is, a node $v$ that does not benefit from adding any more links to any other nodes. Next we provide an upper bound on the distance that a node $u$ with $d(u,v)=l$ may have from the neighbors of $v$, that is, $d(u,N(v))$. This is done by showing that if the distance is too large then $v$ would benefit from adding a link to $u$.
\begin{lemma}
\label{lem:TravelBound}
If a node $v$ in $N$ is add-stable, then for any $u$ with $d(v,u)\leq l$ there exists a node $u'$ of distance at most $O(l^{\frac{\alpha}{d+1}})$ from $u$ that $v$ has a link to, that is, $N(v) \cap B(u, c l^\frac{\alpha}{d+1}) \neq \emptyset$ for some constant $c$.
\end{lemma}
\begin{proof}
For the sake of contradiction, suppose we have a node $u$ such that $N(v) \cap B(u, cl^\frac{\alpha}{d+1}) = \emptyset$. By Lemma \ref{lem:CrstyDrop} adding a link from $v$ to $u$ decreases the separation-cost by at least $c' l^\alpha$ for some constant $c'$. When $\beta<c'$, this implies that the reduction in the separation-cost is larger than the cost of a link to $u$ which is at most $\beta l^{\alpha}$, contradicting add-stability.
\end{proof}

Now we can prove an upper bound on the number of iterations required for the greedy routing algorithm to reach its destination.
\begin{theorem}
If a network $N$ is add-stable and $\alpha=d+1$, then any node $u$ is reachable from any node $v$ via greedy routing in $O(\log(n)) $ steps.
\end{theorem}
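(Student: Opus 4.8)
The plan is to track the grid-distance from the packet's current location to the target and show that a single greedy step shrinks this distance by a constant factor strictly below $1$; iterating then drives the distance below $1$ --- hence exactly onto the target --- within $O(\log n)$ steps.

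First I would set up the greedy step. Fix a target $t$ and suppose the packet currently sits at an add-stable node $v$ with $l := d(v,t) \ge 1$. Greedy routing forwards the packet to the neighbor $w \in N(v)$ minimizing $d(w,t)$. To bound $d(w,t)$, apply Lemma~\ref{lem:TravelBound} with the role of its ``$u$'' played by the target $t$ and its parameter set to $l$; since $d(v,t)=l$ the hypothesis is met, and the lemma produces a neighbor $u' \in N(v)$ with $d(t,u') \le c\,l^{\alpha/(d+1)}$. Because $\alpha = d+1$ the exponent is $1$, so $d(t,u') \le c\,l$, and the greedy choice does at least as well: $d(w,t) \le d(u',t) \le c\,l$.

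The crux is to argue that the constant $c$ can be taken strictly less than $1$. Here I would reopen the constants hidden in Lemmas~\ref{lem:CrstyDrop} and~\ref{lem:TravelBound}. If $v$ had no neighbor in $B(t, c l)$, then Lemma~\ref{lem:CrstyDrop} says that adding a link to $t$ would cut the separation-cost by $\Omega((cl)^{d+1}) = \Omega(c^{d+1} l^{\alpha})$, while the link itself costs at most $\beta l^{\alpha}$; add-stability is contradicted as soon as the former exceeds the latter, i.e. once $c$ exceeds a fixed multiple of $\beta^{1/(d+1)}$. Thus for $\beta$ below an absolute threshold the forcing radius $cl$ is guaranteed with $c<1$, so each greedy step is a genuine contraction $d(w,t) \le c\,d(v,t)$.

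It remains to convert contraction into a step count. Applying the step bound repeatedly --- legitimate because every node visited along the route is add-stable --- the distance after $k$ steps is at most $c^{k}\,d(v_0,t)$, which is at most $c^{k}$ times the diameter of the grid. By Item~\ref{lem:item:DisLeqk} of Observation~\ref{lem:basiclist} the grid, being a ball containing all $n$ nodes, has diameter $O(n^{1/d})$, so once $k = \Theta\!\big(\tfrac{1}{d}\log_{1/c} n\big)$ we have $c^{k}\,O(n^{1/d}) < 1$; since grid-distances are non-negative integers, a distance below $1$ forces the current node to equal $t$. As $d$ and $c$ are constants this is $O(\log n)$ steps, proving the claim. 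The main obstacle is the middle step: the entire argument rests on the contraction factor being strictly below $1$, which is precisely why one must track the interplay between $\beta$ and the constants of Lemma~\ref{lem:CrstyDrop} rather than treat $c$ as an unspecified ``big-$O$'' constant.
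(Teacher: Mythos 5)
Your proposal is correct and takes essentially the same route as the paper: the paper's proof is a two-line appeal to Lemma~\ref{lem:TravelBound}, which at $\alpha=d+1$ yields a constant-factor contraction of the distance to the target in each greedy step, iterated $O(\log n)$ times over a grid of polynomial diameter. Your extra care in verifying that the contraction constant is strictly below $1$ (by tracing the constants of Lemma~\ref{lem:CrstyDrop} and their interplay with $\beta$) makes explicit a condition the paper leaves implicit in the proof of Lemma~\ref{lem:TravelBound} (the clause ``when $\beta<c'$''), but it is a refinement of the same argument rather than a different one.
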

\begin{proof}
When $\alpha=d+1$, by Lemma \ref{lem:TravelBound} we know that in each iteration of greedy routing the distance to destination is reduced by a constant factor, resulting in logarithmic bound on the number of iterations until the destination is reached.
\end{proof}

Next we turn our focus towards bounding the maximum degree when $\alpha = d+1$ by showing that the number of links that $v$ has to all the nodes of distance between $l$ and $cl$ for some constant $c$, is bounded by a constant.
\begin{lemma}
\label{lem:degree_bound_basic}
When $\alpha \leq d+1$, if a node $v$ in $N$ is toggle-stable then $v$ has at most a constant number of 
links to any ball of radius $l'\leq \epsilon l^{\frac{\alpha}{d+1}}\leq \epsilon l$ centered at a node $u$ with $d(u,v) = l$, that is, $|N(v) \cap B(u, l')| \leq c$.
\end{lemma}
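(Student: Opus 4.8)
The plan is to argue by contradiction against toggle-stability: if too many links of $v$ land in the small ball $B(u,l')$, then at least one of them is redundant, in the sense that it could be deleted while losing less separation-cost than it pays in link-cost. Write $W = N(v)\cap B(u,l')$ and $k=|W|$; I may assume $k\ge 2$, as the claim is immediate otherwise. Observe first that every $w\in W$ satisfies $d(v,w)\ge l-l'\ge (1-\epsilon)l$, so its link-cost is $\beta d^\alpha(v,w)=\Omega(l^\alpha)$. Since deleting $w$ cannot reduce $c_N(v)$, the increase in separation-cost caused by removing $w$ must be at least this link-cost, namely $\Omega(l^\alpha)$.

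First I would bound the per-node gain from deleting a single $w\in W$. Because $k\ge 2$, $w$ has a ``buddy'' $w'\in W$ with $d(w,w')\le 2l'$. Hence for any node $x$ whose unique closest neighbour in $N(v)$ is $w$, removing $w$ raises its separation-cost by at most $d(x,w')-d(x,w)\le d(w,w')\le 2l'$; the vertex $w$ itself contributes at most another $2l'$. Since the sets of nodes uniquely served by distinct members of $W$ are disjoint, the total separation-cost increase from deleting $w$ is at most $2l'(n_w+1)$, where $n_w$ is the number of nodes uniquely served by $w$. Combined with the $\Omega(l^\alpha)$ lower bound from toggle-stability, this forces $n_w = \Omega(l^\alpha/l')$ for every $w\in W$, so the total number of nodes served by $W$ is $\Omega(k\, l^\alpha/l')$.

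Next I would produce an upper bound on the number of nodes that can be served by $W$, i.e.\ nodes whose closest neighbour of $v$ lies in $B(u,l')$. Here I would invoke add-stability through Lemma \ref{lem:TravelBound}: for a prospective served node $x$ with $d(u,x)=R$, the neighbour of $v$ guaranteed within distance $O((l+R)^{\alpha/(d+1)})$ of $x$ must, once $R$ exceeds a suitable multiple of $l^{\alpha/(d+1)}$, lie outside $B(u,l')$ and hence be strictly closer to $x$ than all of $W$, contradicting that $x$ is served by $W$. This confines the served nodes to $B(u,O(l^{\alpha/(d+1)}))$, whose cardinality is $O(l^{d\alpha/(d+1)})$ by Item \ref{lem:item:DisLeqk} of Observation \ref{lem:basiclist}.

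Finally, comparing the two bounds gives $k\cdot \Omega(l^\alpha/l') \le O(l^{d\alpha/(d+1)})$, whence $k = O(l'/l^{\alpha/(d+1)}) = O(\epsilon)$ after substituting $l'\le \epsilon l^{\alpha/(d+1)}$; taking $\epsilon$ a fixed small constant makes this a constant, uniformly in $\alpha\le d+1$. I expect the confinement step to be the main obstacle: one must ensure the neighbour supplied by Lemma \ref{lem:TravelBound} genuinely falls outside the anomalously dense cluster $W$, and the arithmetic is tightest in the balanced case $\alpha=d+1$, where the exponent $\alpha/(d+1)$ equals $1$ and the constant coming from Lemma \ref{lem:TravelBound} must be controlled (for instance by taking $\beta$ suitably small); for $\alpha<d+1$ the exponent is strictly below $1$ and the confinement is comfortable.
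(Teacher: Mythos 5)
Your proof is correct and follows essentially the same route as the paper's: both arguments use Lemma \ref{lem:TravelBound} to confine the nodes served by $N(v)\cap B(u,l')$ to a ball of radius $O(l^{\frac{\alpha}{d+1}})$, bound their number via Item \ref{lem:item:DisLeqk} of Observation \ref{lem:basiclist}, and play the $\Omega(l^\alpha)$ link-cost against the separation-cost increase that toggle-stability forces a deleted link to have been preventing. The only difference is bookkeeping: the paper pigeonholes to find a link serving at most the average number of nodes and charges each displaced node the confinement diameter $c''l^{\frac{\alpha}{d+1}}$, whereas you lower-bound every link's load using the tighter $2l'$ ``buddy'' bound and then sum over disjoint served sets --- and both versions share the same caveat you correctly flag, namely that at $\alpha=d+1$ the confinement step needs the constant from Lemma \ref{lem:TravelBound} (hence $\beta$) to be suitably small.
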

\begin{proof}
Any node within $B(u,l')$ has distance at most $l+l'$ from $v$, and since $v$ in $N$ is toggle-stable, by Lemma \ref{lem:TravelBound}
we can argue that any node within the ball can serve only nodes of distance at most $c'(l+l')^{\frac{\alpha}{d+1}}$ for some constant $c'$.
Therefore any node that is served by a node in $B(u,l')$ must also lie within the ball $B(u,l'+c'(l+l')^{\frac{\alpha}{d+1}})$.
Since $l'\leq \epsilon l^{\frac{\alpha}{d+1}}\leq \epsilon l$, we have
$ l' + c'(l+l')^{\frac{\alpha}{d+1}} \leq c'' l^\frac{\alpha}{d+1}$ for some constant $c''$. Let $C(u')$ be the set of the nodes whose closest node in $N(v)$ is $u'$.
By Item \ref{lem:item:DisLeqk} of Lemma \ref{lem:basiclist} we can argue that
$$\sum_{u'\in N(v) \cap B(u,l')} |C(u')| \leq c^*(l^\frac{\alpha}{d+1})^d.$$
for some constant $c^*$.

By removing a link to a node $u'\in B(u,l')$, the separation-cost induced to $v$ is increased by at most $c'' l^\frac{\alpha}{d+1} |C(u')|$, and therefore there is a node $u' \in  N(v) \cap B(u,l')$ the removal of which decreases the total cost induced to $v$ by at least
$$\beta {(l-l')}^\alpha - c'' l^\frac{\alpha}{d+1} |C(u')| \geq \beta {(l-l')}^\alpha - c'' l^\frac{\alpha}{d+1} \times \frac{c^* l^\frac{\alpha}{d+1})^d}{|N(v) \cap B(u,l')|} \geq \beta(1-\epsilon)^\alpha l^\alpha - \frac{c^\star l^\alpha}{|N(v) \cap B(u,l')|}. $$

Therefore if $|N(v) \cap B(u,l')| > \frac{c^\star}{\beta (1-\epsilon)^\alpha} = c$, $v$ would be able to decrease its total cost by removing $u'$, contradicting stability.
\end{proof}

\begin{theorem}
If a node $v$ in $N$ is toggle-stable and $\alpha=d+1$, then the degree of $v$ is bounded by $O(\log n)$ links.
\end{theorem}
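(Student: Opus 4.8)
The plan is to decompose the grid into geometrically growing shells centered at $v$, bound the number of $v$'s links into each shell by a constant using Lemma \ref{lem:degree_bound_basic}, and then observe that only $O(\log n)$ shells are needed to cover the whole grid. This is purely a degree-counting argument; the routing theorem plays no role.

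First I would fix a constant $\rho>1$ (very close to $1$) and consider the shells $A_i = B(v,\rho^{i+1})\setminus B(v,\rho^{i})$ for $i=0,1,2,\dots$. Since the grid has $n$ nodes, Item \ref{lem:item:DisLeqk} of Observation \ref{lem:basiclist} forces its diameter to be $O(n^{1/d})$, so every node lies in some $A_i$ with $\rho^{i}=O(n^{1/d})$. The number of nonempty shells is therefore $O(\log_\rho n^{1/d})=O(\log n)$, using that $d$ is constant. The innermost ball $B(v,O(1))$ contains only $O(1)$ nodes and hence accounts for at most $O(1)$ links directly, so I only need the shell argument once $\rho^{i}$ exceeds a fixed constant.

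Next I would bound the number of links $v$ has into a single shell $A_i$. Writing $l=\rho^{i}$, the shell is contained in a shell of radius $\rho l$ and thickness $(\rho-1)l$. I choose the covering radius $l'=\epsilon' l$ for a small constant $\epsilon'$, picked small relative to the $\epsilon$ of Lemma \ref{lem:degree_bound_basic} and with $\rho$ close enough to $1$ that $(\rho-1)l\le l'$, so that $A_i$ sits inside a reference shell of thickness $l'$. Item \ref{lem:item:DisEqkCover} of Observation \ref{lem:basiclist} then covers $A_i$ with $O\!\bigl((l/l')^{d-1}\bigr)=O\!\bigl((1/\epsilon')^{d-1}\bigr)=O(1)$ balls $B(u,l')$, whose centers $u$ satisfy $d(u,v)\ge(\rho-\epsilon')l$. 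Because $\alpha=d+1$ gives $l^{\alpha/(d+1)}=l$, the radius condition $l'\le\epsilon\,d(u,v)$ of Lemma \ref{lem:degree_bound_basic} holds for these centers once $\epsilon'$ is small enough, so each covering ball satisfies $|N(v)\cap B(u,l')|\le c$. Summing over the $O(1)$ covering balls shows $v$ has at most $O(1)$ links into $A_i$.

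Combining the two counts gives $O(1)$ links into each of $O(\log n)$ shells, hence total degree $O(\log n)$. The step requiring the most care is the consistent choice of the constants $\rho$ and $\epsilon'$: $\rho-1$ must be small enough that each shell is thin enough to be covered by balls of radius $l'$ via Item \ref{lem:item:DisEqkCover}, while $\epsilon'$ must be small enough that every covering ball's radius stays within the regime $l'\le\epsilon\,d(u,v)$ where Lemma \ref{lem:degree_bound_basic} is valid. Once those constants are fixed, each shell genuinely contributes only a constant, and the logarithmic bound follows immediately from the geometric growth of the shell radii.
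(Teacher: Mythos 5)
Your proposal is correct and is essentially the paper's own proof: the paper likewise partitions the nodes into geometric shells $P(i)$ (with ratio $1+\epsilon$ in place of your $\rho$), covers each shell by $O(1)$ balls of radius $\epsilon(1+\epsilon)^i$ using Item \ref{lem:item:DisEqkCover} of Observation \ref{lem:basiclist}, bounds the links into each covering ball by a constant via Lemma \ref{lem:degree_bound_basic}, and sums over the $O(\log n)$ shells. Your version merely makes explicit the consistency conditions on the constants ($\rho-1\le\epsilon'$ and $\epsilon'$ small relative to $\epsilon$) that the paper leaves implicit.
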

\begin{proof}
For any integer $0\leq i\leq log_{1+\epsilon} (n)$ let $P(i)$ be the set of grid points that are of distance $l$ from $v$ such that $(1+\epsilon)^i <l \leq (1+\epsilon)^{i+1}$. By Lemma \ref{lem:degree_bound_basic} for any $p\in P(i)$ we know that the number of links that $v$ has to the members of the ball $B(p, \epsilon (1+\epsilon)^i)$ is bounded by constant. Also, by Item \ref{lem:item:DisEqkCover} of Lemma \ref{lem:basiclist}  the nodes in $P(i)$ can be covered using $c{\frac{(1+\epsilon)^{i+1}}{\epsilon (1+\epsilon)^i}}^{d-1}$ such balls for some constant $c$. Therefore, for each $0\leq i\leq log_{1+\epsilon} (n)$ we can bound the size of $N(v) \cap P(i)$ by a constant, concluding the theorem.
\end{proof}

\subsection{Sub-logarithmic diameter: $\alpha < d+1$}
In this section we consider cases when $\frac{1}{\log_{\beta^{-1}} n} <\alpha < d+1$ and prove upper bounds on the number of iterations that greedy routing requires and maximum degree for this range of $\alpha$.

\begin{theorem}
If a network $N$ is add-stable and $\alpha<d+1$, then any node $u$ is reachable from any node $v$ via greedy routing in $O(\log_{\frac{d+1}{\alpha}} \log(n)) $ steps.
\end{theorem}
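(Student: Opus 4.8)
The plan is to track the grid-distance from the current packet-holder to the target $u$ and to show that each greedy step shrinks this distance super-linearly, so that a doubly-logarithmic number of steps suffices. Write $\gamma = \frac{\alpha}{d+1}$ and note that since $\alpha < d+1$ we have $\gamma < 1$. The engine of the argument is Lemma \ref{lem:TravelBound}: if the packet currently sits at an add-stable node $w$ whose distance to the target is $d(w,u)=l$, then $w$ has a link to some node $u'$ with $d(u',u)\le c\,l^{\gamma}$. Greedy routing forwards the packet to the neighbor of $w$ that is closest to $u$, so after a single step the distance to the target is at most $c\,l^{\gamma}$.

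Letting $l_k$ denote the distance from the packet to $u$ after $k$ steps, I would then establish the recurrence $l_{k+1}\le c\,l_k^{\gamma}$, with initial value $l_0\le \mathrm{diam}(N)=\Theta(n^{1/d})$, the largest possible grid-distance. The natural next step is to linearize by setting $y_k=\log l_k$, which yields the affine recurrence $y_{k+1}\le \gamma\,y_k+\log c$. Solving it, with fixed point $y^\star=\frac{\log c}{1-\gamma}$, gives $y_k\le y^\star+\gamma^k(y_0-y^\star)$; since $y_0=\Theta(\log n)$ and $y^\star$ is a constant, the contracting factor $\gamma^k$ drives $y_k$ down to a constant once $\gamma^k=O(1/\log n)$, i.e. once $k\ge \log_{1/\gamma}\log n+O(1)$. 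Because $1/\gamma=\frac{d+1}{\alpha}$, this is exactly the claimed $O(\log_{\frac{d+1}{\alpha}}\log n)$ bound, and I expect the base of the iterated logarithm to fall out cleanly from the exponent $\gamma$ in precisely this way.

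The main obstacle is the constant-distance tail. The recurrence only guarantees strict progress while $l>l^\star:=c^{1/(1-\gamma)}$, a constant; below this threshold $c\,l^{\gamma}$ may exceed $l$, so Lemma \ref{lem:TravelBound} alone does not rule out greedy routing stalling at a local minimum a constant distance from $u$. I would close this gap by a separate, $n$-independent argument: since $B(u,l^\star)$ contains only $O(1)$ grid points, it suffices to show that within this constant ball an add-stable node always has a link strictly closer to $u$. This should follow from the separation-cost drop of Lemma \ref{lem:CrstyDrop} together with the small-$\beta$ regime in which Lemma \ref{lem:TravelBound} is applied (namely $\beta<1$), which forces add-stable nodes to maintain links to their very close neighbors; for instance, an add-stable node cannot omit a distance-$1$ neighbor, since adding that link would drop its separation cost by at least $1>\beta$. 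Because this tail consumes only a constant number of additional steps, it is absorbed into the $O(\log_{\frac{d+1}{\alpha}}\log n)$ estimate, completing the bound.
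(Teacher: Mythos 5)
Your proposal is correct and follows essentially the same route as the paper: both arguments rest on Lemma \ref{lem:TravelBound} to obtain the per-step contraction $l \mapsto c\,l^{\alpha/(d+1)}$ and iterate it $O(\log_{\frac{d+1}{\alpha}}\log n)$ times until the distance to the target is constant. Your explicit solution of the affine recurrence and your handling of the constant-distance tail (via add-stability forcing links to distance-$1$ neighbors when $\beta<1$) merely make rigorous what the paper leaves implicit, rather than constituting a different approach.
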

\begin{proof}
By Lemma \ref{lem:TravelBound} if in an iteration of greedy routing the distance to destination is $l$, then in next iteration the distance is at most $cl^\frac{\alpha}{d+1}$ for some constant $c$. Therefore, after
$$\log_{\frac{\alpha}{d+1}} (\frac{1}{\log(n)}) = \log_{\frac{d+1}{\alpha}} \log(n)$$
iterations the distance to destination will be constant.
\end{proof}

\begin{theorem}
If a node $v$ in $N$ is toggle-stable and $\alpha<d+1$, then the degree of $v$ is bounded by $o(\log \log (n) n^{1-\delta\frac{\alpha}{d+1}})$ links for any $\delta<1$.
\end{theorem}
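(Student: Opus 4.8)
The plan is to reuse the shell-decomposition strategy from the $\alpha=d+1$ degree bound, counting the links $v$ holds shell by shell, while tracking the one essential new feature: when $\alpha<d+1$ the radius $\epsilon l^{\frac{\alpha}{d+1}}$ of the balls to which Lemma \ref{lem:degree_bound_basic} applies is genuinely sublinear in the distance $l$. In the balanced case a dyadic shell at distance $\Theta(l)$ was coverable by $O(1)$ admissible balls because their radius was $\Theta(l)$; here each admissible ball has radius only $\epsilon l^{\frac{\alpha}{d+1}}\ll l$, so the number of balls needed to cover a shell grows with $l$, and that growth is exactly what drives the degree bound.

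Concretely, first I would fix $\epsilon$ bounded away from $1$ and set $P(i)=B(v,(1+\epsilon)^{i+1})\setminus B(v,(1+\epsilon)^i)$ for $0\le i\le \log_{1+\epsilon}(n^{1/d})$, so that there are $O(\log n)$ shells covering all attainable distances (the grid diameter is $\Theta(n^{1/d})$). Writing $l=(1+\epsilon)^i$, the shell $P(i)$ has distances $\Theta(l)$ and width $\Theta(l)$, and every center in it sits at distance at least $l$ from $v$, so a ball of radius $l'=\epsilon l^{\frac{\alpha}{d+1}}$ centered in $P(i)$ is admissible and, by Lemma \ref{lem:degree_bound_basic}, holds at most a constant number of links of $v$. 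Since $P(i)$ is a thick shell, Item \ref{lem:item:DisLeqkCover} of Observation \ref{lem:basiclist} covers it with $O\big((l/l')^{d}\big)=O\big(l^{\,d(1-\frac{\alpha}{d+1})}\big)$ such balls, giving $|N(v)\cap P(i)|=O\big(l^{\,d(1-\frac{\alpha}{d+1})}\big)$.

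Summing over $i$ produces a geometric series with ratio $(1+\epsilon)^{d(1-\frac{\alpha}{d+1})}>1$, so the total is dominated by the outermost shell at $l=\Theta(n^{1/d})$, yielding a degree bound of $O\big(n^{1-\frac{\alpha}{d+1}}\big)$. Because $n^{1-\frac{\alpha}{d+1}}=o\big(\log\log n\cdot n^{1-\delta\frac{\alpha}{d+1}}\big)$ for every fixed $\delta<1$, this already implies the claim. To reproduce the advertised shape directly, I would instead group distances into the $O(\log\log n)$ ranges $[l_{k+1},l_k]$ generated by the routing recursion $l_{k+1}=\Theta(l_k^{\frac{\alpha}{d+1}})$ — the same recursion behind the $O(\log_{\frac{d+1}{\alpha}}\log n)$ routing diameter — bound the link count of each range crudely by its maximal value $O(n^{1-\frac{\alpha}{d+1}})$, and multiply by the $O(\log\log n)$ ranges.

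I expect the main work to be exponent bookkeeping rather than a conceptual obstacle: I must check that $l'=\epsilon l^{\frac{\alpha}{d+1}}\le \epsilon\,(d(u,v))^{\frac{\alpha}{d+1}}$ for every center $u$ in the cover (true since all centers in $P(i)$ lie at distance at least $l$), that Item \ref{lem:item:DisLeqkCover} is invoked with the thick-shell width, and that the constant emerging from Lemma \ref{lem:degree_bound_basic} stays bounded for the chosen $\epsilon$. The single genuinely new observation relative to the $\alpha=d+1$ case is that the per-shell ball count now grows like $l^{\,d(1-\frac{\alpha}{d+1})}$, so the sum is front-loaded on the largest shell; recognizing this domination is what converts the shell decomposition into the $n^{1-\frac{\alpha}{d+1}}$ scaling and, after absorbing the polynomial slack and the $\log\log n$ from the routing-scale grouping, the stated bound.
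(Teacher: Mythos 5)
Your proof is correct, and while it uses the same two ingredients as the paper (Lemma \ref{lem:degree_bound_basic} plus the covering bound of Item \ref{lem:item:DisLeqkCover} of Observation \ref{lem:basiclist}), your decomposition is genuinely different and yields a strictly stronger bound. The paper partitions distances into the $O(\log\log n)$ doubly-exponential shells $n^{a^{i+1}/d} < l \le n^{a^i/d}$ for a constant $a<1$ (playing the role of $\delta$), covers each shell with balls whose radius $\epsilon (n^{a^{i+1}/d})^{\frac{\alpha}{d+1}}$ is calibrated to the shell's \emph{inner} boundary, and bounds the degree by the shell count times the worst per-shell ball count, giving $\log_{a^{-1}}\log (n) \cdot n^{1-a\frac{\alpha}{d+1}}$; the $\delta$-slack in the exponent is exactly the price of tying the ball radius to an inner boundary that is polynomially smaller than the outer one. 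Your dyadic shells keep inner and outer radii within a constant factor, so the admissible radius $\epsilon l^{\frac{\alpha}{d+1}}$ is tight throughout the shell, the per-shell link count is $O(l^{d(1-\frac{\alpha}{d+1})})$, and the geometric sum is dominated by the outermost shell, giving $O(n^{1-\frac{\alpha}{d+1}})$ with neither the $\log\log n$ factor nor the $\delta$-loss; as you note, this subsumes the stated $o(\cdot)$ bound. Your side checks (monotonicity of the admissible radius in the center's distance, invoking the thick-shell covering item rather than Item \ref{lem:item:DisEqkCover}) are exactly the right ones. Two minor remarks: the geometric-series constant $(1-(1+\epsilon)^{-d(1-\frac{\alpha}{d+1})})^{-1}$ requires $\alpha$ bounded away from $d+1$, which is implicit in treating $\alpha$ as a constant (the paper makes the same implicit assumption, e.g.\ the constant in Lemma \ref{lem:degree_bound_basic} depends on $\alpha$), and failing that you can still multiply the worst shell by all $O(\log n)$ shells and beat the claimed bound for constant $\alpha>0$. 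Also, your closing suggestion to regroup into the routing-recursion ranges is unnecessary and would actually weaken the result — calibrating one ball radius to an entire such range reintroduces a squared exponent $(\frac{\alpha}{d+1})^2$, which is essentially the paper's weaker bookkeeping — so you should simply keep your primary argument.
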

\begin{proof}
For some positive constant $0<a<1$ and some integer $0\leq i\leq \log_{a^{-1}} \log (n)$ let $P(i)$ be the set of grid points that are of distance $l$ from $v$ such that $n^{a^{i+1}/d}<l \leq n^{a^i/d}$. By Lemma \ref{lem:degree_bound_basic} for any $p\in P(i)$ we know that the number of links that $v$ has to the members of the ball $B(p,\epsilon(n^{a^{i+1}/d})^\frac{\alpha}{d+1})$ is bounded by constant.

By Item \ref{lem:item:DisLeqkCover} of Lemma \ref{lem:basiclist} $P(i)$ can be covered with $O(n^{a^i-a^{i+1}\frac{\alpha}{d+1}})$ balls $B(p,\epsilon(n^{a^{i+1}/d})^\frac{\alpha}{d+1})$ where $p\in P(i)$.

Therefore the total number of links of $v$ can be bounded by
$$|N(v)|\leq c + \sum_{i=0}^{i \leq \log_{a^{-1}} \log (n)} n^{a^i-a^{i+1}\frac{\alpha}{d+1}} \leq c + \log_{a^{-1}} \log (n) n^{1-a\frac{\alpha}{d+1}}$$
for some constant $c$.
\end{proof}

\subsection{Sub-logarithmic degrees: $\alpha > d+1$}
Now let us turn our focus to cases when $d+1 < \alpha < \log(\beta^{-1}n^2)$ and prove bounds analogous to those of previous sections. Here we need to modify some of our lemmas, as in this case the cost of links increases relative to length so fast that it no longer benefits $v$ to establish a link to the middle of a region it wishes to cover with that link.

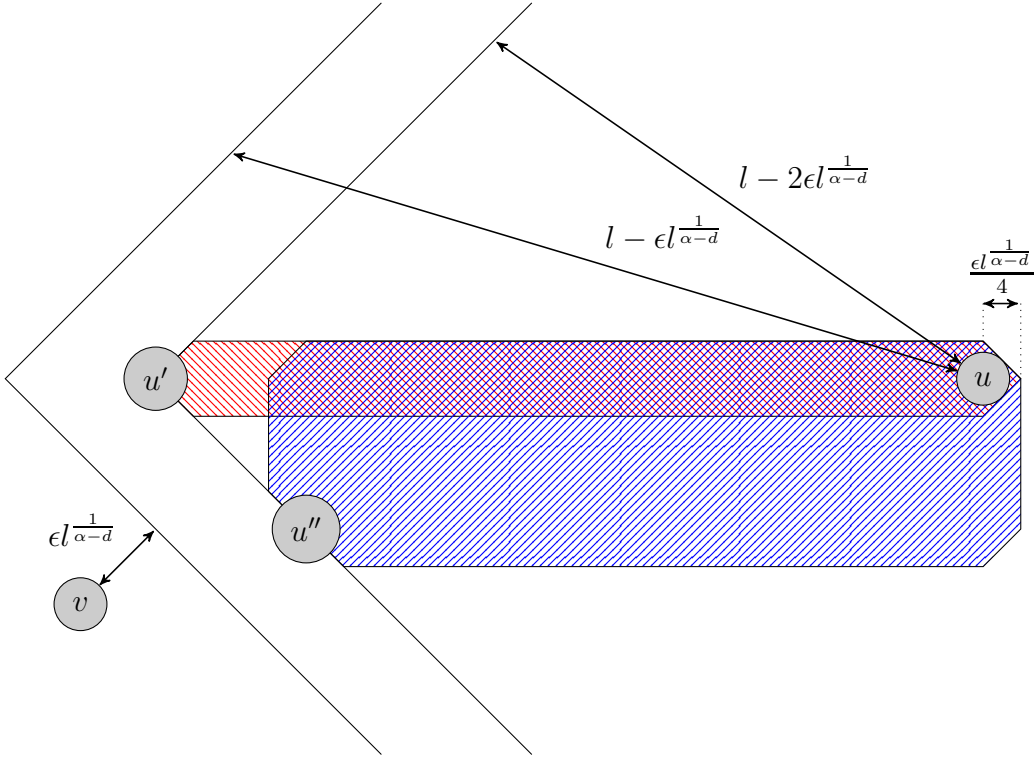
\begin{figure}
\centering
\begin{tikzpicture}[darkstyle/.style={circle,draw,fill=gray!40,minimum size=20}]
\begin{scope}[scale = 1]
		\draw [pattern=north west lines, pattern color=red] (4,6) -- (4.5,6.5) -- (15,6.5) -- (15.5,6) -- (15,5.5)--(4.5,5.5) -- (4,6);
		\draw [pattern=north east lines, pattern color=blue] (6,4) -- (5.5,4.5) -- (5.5,6) -- (6,6.5) -- (15,6.5) -- (15.5,6) -- (15.5,4) -- (15,3.5)-- (6.5,3.5) -- (6,4);
	\draw (7,1) -- (2,6) -- (7,11);
	\draw (9,1) -- (4,6) -- (9,11);

	\node [darkstyle]  (v) at (3,3) {$v$};
	\node [darkstyle]  (u) at (15,6) {$u$};
	\node [darkstyle]  (up) at (4,6) {$u'$};
	\node [darkstyle]  (upp) at (6,4) {$u''$};
	\draw [<->,>=stealth',shorten >=1pt, auto, semithick, scale = 1, transform shape] (u)  -- node[above right] {$l-\epsilon l^\frac{1}{\alpha-d}$}  (5,9);
	\draw [<->,>=stealth',shorten >=1pt, auto, semithick, scale = 1, transform shape] (u)  -- node[above right] {$l-2\epsilon l^\frac{1}{\alpha-d}$}   (8.5,10.5);
	\draw [<->,>=stealth',shorten >=1pt, auto, semithick, scale = 1, transform shape] (v)  -- node[above left] {$\epsilon l^\frac{1}{\alpha-d}$}   (4,4);
	\draw [<->,>=stealth',shorten >=1pt, auto, semithick, scale = 1, transform shape] (15,7)  -- node[above] {$\frac{\epsilon l^\frac{1}{\alpha-d}}{4}$}   (15.5,7);
	\draw [dotted] (15,7) -- (15,6.5);
	\draw [dotted] (15.5,7) -- (15.5,6);
\end{scope}
\end{tikzpicture}
\caption{For Lemma \ref{lem:lalpha:TravelBound}. Adding a link from $v$ to $u'$ (resp. $u''$) decreases the separation-cost of each of the nodes in the area indicated by red (resp. blue) hashing by at least $\frac{\epsilon l^\frac{1}{\alpha-d}}{4}$.}
\label{fig:lem:lalpha:TravelBound}
\end{figure}

\begin{lemma}
\label{lem:lalpha:TravelBound}
If a node $v$ in $N$ is add-stable, then for any $u$ with $d(v,u)\leq l$ there exists a node $u'\in N(v)$ with $d(u,u')\leq l-\epsilon l^\frac{1}{\alpha-d}$.
\end{lemma}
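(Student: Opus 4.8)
The plan is to argue by contradiction in the same spirit as Lemma~\ref{lem:TravelBound}: assume $v$ is add-stable, yet for some $u$ with $d(v,u)\le l$ every neighbor lies outside $B(u,\,l-\epsilon l^{1/(\alpha-d)})$, so that $d(u,N(v))>l-\epsilon l^{1/(\alpha-d)}$, and then exhibit a single link whose addition strictly lowers $c_N(v)$. The crucial difference from the $\alpha=d+1$ regime is the one flagged in the text: since $\alpha>d+1$, a link aimed at (or near) $u$ costs $\beta\, d(v,u)^\alpha\approx\beta l^\alpha$, which dwarfs the $O(l^{d+1})$ separation-cost drop promised by Lemma~\ref{lem:CrstyDrop}, so the empty-ball bound of that lemma is by itself useless here. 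Instead I would add a \emph{cheap} link to a node $u'$ sitting close to $v$, at distance $\rho=2\epsilon l^{1/(\alpha-d)}$ from $v$ along the geodesic toward $u$, as in Figure~\ref{fig:lem:lalpha:TravelBound}; its link-cost is only $\beta\rho^\alpha=\Theta\!\big(\epsilon^\alpha l^{\alpha/(\alpha-d)}\big)$.

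The benefit of this link comes not from nodes near $u'$ but from the long, thin sleeve of far-away nodes strung along the segment from $u'$ to $u$. For any node $w$ the separation cost after adding $u'$ is $\min\!\big(d(w,N(v)),d(w,u')\big)$, so the drop is $d(w,N(v))-d(w,u')$ whenever this is positive. Using only the hypothesis $d(u,N(v))>l-\epsilon l^{1/(\alpha-d)}$ and the triangle inequality, every neighbor is farther than $(l-\epsilon l^{1/(\alpha-d)})-d(w,u)$ from $w$, so the drop exceeds $(l-\epsilon l^{1/(\alpha-d)})-\big(d(w,u)+d(w,u')\big)$. For $w$ on the geodesic between $u'$ and $u$ we have $d(w,u)+d(w,u')=d(u',u)=l-\rho$, making the drop at least $\rho-\epsilon l^{1/(\alpha-d)}=\epsilon l^{1/(\alpha-d)}$; for $w$ slightly off the geodesic, $d(w,u)+d(w,u')$ overshoots $l-\rho$ by a ``detour'' term and the guaranteed drop degrades by exactly that detour, which is precisely the $\tfrac{\epsilon l^{1/(\alpha-d)}}{4}$ estimate indicated in the figure (the detour is capped at $\tfrac34\epsilon l^{1/(\alpha-d)}$). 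Since the drop is strictly positive, $u'$ is automatically a new link.

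I would then sum these per-node drops over the sleeve of nodes whose detour is at most $\tfrac34\epsilon l^{1/(\alpha-d)}$, i.e.\ the thickened segment joining $u'$ to $u$. Counting its lattice points with Observation~\ref{lem:basiclist} (length $\Theta(l)$, cross-sectional radius $\Theta(\sqrt{\epsilon l^{1/(\alpha-d)}\cdot l})$) yields a total separation-cost drop of order $\epsilon^{(d+1)/2}\, l^{E}$ with $E=\tfrac{(d+1)(\alpha-d+1)}{2(\alpha-d)}$. Comparing with the link-cost exponent $\tfrac{\alpha}{\alpha-d}$ reduces to the sign of $(d+1)(\alpha-d+1)-2\alpha=(d-1)(\alpha-d-1)$, which is $\ge 0$ for $\alpha\ge d+1$: the drop's $l$-exponent strictly dominates the cost's when $d\ge 2$, and exactly matches it when $d=1$. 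In the $d=1$ case the drop is $\Theta(\epsilon\, l^{\alpha/(\alpha-1)})$ against a cost $\Theta(\epsilon^\alpha l^{\alpha/(\alpha-1)})$, so choosing the constant $\epsilon$ small enough wins because $\alpha>1$ forces $\epsilon^\alpha\ll\epsilon$; for $d\ge 2$ the drop dominates for all large $l$ and any fixed small $\epsilon$. Either way the drop strictly exceeds the link-cost, contradicting add-stability.

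The main obstacle I expect is the geometric counting step of the third paragraph: faithfully transcribing the Euclidean ``ellipse/detour'' picture into the grid metric, lower-bounding the per-node drop by $\Omega(\epsilon l^{1/(\alpha-d)})$ and the sleeve size by the claimed cross-section via Observation~\ref{lem:basiclist}, all while tracking constants carefully enough to survive the knife-edge $d=1$ case where the two $l$-exponents coincide and the inequality must be won on the power of $\epsilon$ alone.
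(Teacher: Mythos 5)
Your overall strategy is the same as the paper's: assume $N(v)\cap B(u,\,l-\epsilon l^{1/(\alpha-d)})=\emptyset$, add a cheap link to a node $u'$ at distance $\Theta(\epsilon l^{1/(\alpha-d)})$ from $v$ on the way to $u$, lower-bound the per-node separation-cost drop by $\tfrac{\epsilon l^{1/(\alpha-d)}}{4}$ over a sleeve of nodes strung between $u'$ and $u$, and beat the link-cost $\beta(2\epsilon l^{1/(\alpha-d)})^\alpha$. Your per-node drop estimate is correct. The genuine gap is in the counting step for $d\ge 2$, and it is exactly the obstacle you flagged: the cross-sectional radius $\Theta\big(\sqrt{\delta l}\big)$ (with detour cap $\delta=\Theta(\epsilon l^{1/(\alpha-d)})$) is a Euclidean fact about ellipses and is false in the grid metric. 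In $\ell_1$, when $u'$ and $u$ differ in a single coordinate, one has $d(u',w)+d(w,u)=d(u',u)+2\,\mathrm{dist}(w,\sigma)$, where $\sigma$ is the geodesic segment and $\mathrm{dist}$ is grid distance; hence the detour-$\le\delta$ region is a capsule of cross-sectional radius $\delta/2$ containing only $\Theta(l\,\delta^{d-1})$ nodes, far fewer than your $\Theta\big(l\,(\delta l)^{(d-1)/2}\big)$. Since the lemma quantifies over all $u$, this axis-aligned placement is the binding worst case, so your count overestimates precisely where it matters.

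Consequently your conclusion that for $d\ge 2$ the drop's $l$-exponent strictly dominates the cost's (via the sign of $(d-1)(\alpha-d-1)$) is unjustified. With the correct grid count the total drop is $\Theta(\epsilon^d\, l^{\alpha/(\alpha-d)})$, i.e.\ the \emph{same} $l$-exponent $\frac{\alpha}{\alpha-d}$ as the link cost $\Theta(\epsilon^\alpha l^{\alpha/(\alpha-d)})$ in every dimension, not only at $d=1$. This is what the paper's sketch does: it counts $\Omega\big((l^{1/(\alpha-d)})^{d-1}\,l\big)$ sleeve nodes, each dropping by $\Omega(l^{1/(\alpha-d)})$, for a total of $\Omega(l^{\alpha/(\alpha-d)})$ whose hidden constant scales like $\epsilon^d$, and then wins the comparison $\epsilon' l^{\alpha/(\alpha-d)}>\beta(2\epsilon l^{1/(\alpha-d)})^\alpha$ purely on the powers of $\epsilon$, using $\alpha>d$. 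Your own $d=1$ analysis already contains this move ($\epsilon^\alpha\ll\epsilon$), so the repair is immediate: discard the ellipse, use the capsule of width $\Theta(\epsilon l^{1/(\alpha-d)})$, and run the knife-edge $\epsilon$-comparison in all dimensions. With that replacement your argument coincides with the paper's.
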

\noindent\emph{Sketch of proof:}
For the sake of contradiction assume that $v$ has no link to any node in $B(u,l-\epsilon l^\frac{1}{\alpha-d})$ and let $u'$ be $\arg \min_{u' \in B(u,l-2\epsilon l^\frac{1}{\alpha-d})} d(u',v)$. Then, the node $v$ benefits from adding a link to $u'$, as $u'$ will be able to reduce the separation-cost of $\Omega((l^\frac{1}{\alpha-d})^{d-1}l)$ nodes by at least $\Omega(l^\frac{1}{\alpha-d})$ (See Figure \ref{fig:lem:lalpha:TravelBound}), and therefore reduce its total separation-cost by $\Omega(l^{1+\frac{d}{\alpha-d}})$. We can argue that
$$\epsilon' l^{1+\frac{d}{\alpha-d}} = \epsilon' l^{\frac{\alpha}{\alpha-d}} > \beta(2\epsilon l^\frac{1}{\alpha-d})^\alpha$$
for proper value of $\epsilon$ and $\beta$, and therefore it is beneficial for $v$ to add a link to $u'$. \qed

\begin{theorem}
If a network $N$ is add-stable and $\alpha>d+1$, then any node $u$ is reachable from any node $v$ via greedy routing in $o(\log\log(n) n^{1-\frac{\delta}{\alpha-d}})$ steps where $\delta$ is any constant smaller than $1$.
\end{theorem}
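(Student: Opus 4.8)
The bound to prove is the routing-diameter analogue of the degree bound for $\alpha<d+1$, so the plan is to imitate that proof with ``greedy steps'' playing the role of ``links''. The only input I need is Lemma \ref{lem:lalpha:TravelBound}: if $N$ is add-stable and the packet currently sits at distance $l$ from the target, then the current node has a neighbor within distance $l-\epsilon l^{\frac{1}{\alpha-d}}$ of the target, so one greedy step reduces the distance from $l$ to at most $l-\epsilon l^{\frac{1}{\alpha-d}}$. Since distances are integers and $\epsilon l^{\frac{1}{\alpha-d}}>0$ for every $l\geq 1$, the distance strictly decreases at each step and eventually hits $0$; the issue is only to count the steps.

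First I would record that the starting distance $d(v,u)$ is at most the diameter of the grid, which is $O(n^{1/d})$ by Item \ref{lem:item:DisLeqk} of Observation \ref{lem:basiclist}. Then, fixing a constant $a$ with $\delta<a<1$, I would cut the distance range into the doubly-exponential scales used for the degree bound: for $0\leq i\leq \log_{a^{-1}}\log n$, scale $i$ consists of the steps performed while the current distance $l$ lies in $(n^{a^{i+1}/d},\,n^{a^i/d}]$. Scale $0$ reaches up to $n^{1/d}$ and so contains the start, while at the top scale $n^{a^i/d}$ is constant, so $O(\log\log n)$ scales cover everything down to constant distance (from which $u$ is reached in $O(1)$ further steps).

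Within scale $i$ the distance never drops below $n^{a^{i+1}/d}$, so by Lemma \ref{lem:lalpha:TravelBound} every step inside the scale shrinks the distance by at least the minimum decrement $\epsilon\big(n^{a^{i+1}/d}\big)^{\frac{1}{\alpha-d}}$. As the total distance travelled inside the scale is at most its width $n^{a^i/d}$, the number of steps spent in scale $i$ is at most
\[
\frac{n^{a^i/d}}{\epsilon\,\big(n^{a^{i+1}/d}\big)^{\frac{1}{\alpha-d}}}=\frac{1}{\epsilon}\,n^{\frac{1}{d}\left(a^i-\frac{a^{i+1}}{\alpha-d}\right)}.
\]
Because $a<1<\alpha-d$, the exponent $\frac{a^i}{d}\big(1-\frac{a}{\alpha-d}\big)$ is positive and decreasing in $i$, so summing the $O(\log\log n)$ terms is dominated by $i=0$ and yields a total of $O\!\big(\log\log n\cdot n^{\frac{1}{d}(1-\frac{a}{\alpha-d})}\big)$ steps. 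Finally, since $d\geq 1$ and $a>\delta$, the exponent obeys $\frac{1}{d}\big(1-\frac{a}{\alpha-d}\big)\leq 1-\frac{a}{\alpha-d}<1-\frac{\delta}{\alpha-d}$, which upgrades the $O$ into the claimed $o\!\big(\log\log n\cdot n^{1-\frac{\delta}{\alpha-d}}\big)$.

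The part needing the most care is the per-scale accounting: I must argue that greedy routing is monotone in distance, so the steps partition cleanly among scales, and that bounding ``steps in a scale'' by (scale width)$/$(minimum decrement) is legitimate. A cleaner alternative that sidesteps the scale bookkeeping entirely is a potential argument with $\Phi=l^{1-\frac{1}{\alpha-d}}$: concavity of $x\mapsto x^{1-\frac{1}{\alpha-d}}$ gives $\Phi_{\text{new}}\leq \Phi-(1-\tfrac{1}{\alpha-d})\epsilon$, so each step drops $\Phi$ by a constant and the number of steps is $O(\Phi_{\text{start}})=O\!\big(n^{\frac{1}{d}(1-\frac{1}{\alpha-d})}\big)$, which is already $o\!\big(n^{1-\frac{\delta}{\alpha-d}}\big)$ and in particular implies the stated bound even without the $\log\log n$ factor.
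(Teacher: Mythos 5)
Your main argument is correct and is essentially the paper's own proof, executed more carefully: the paper likewise iterates Lemma \ref{lem:lalpha:TravelBound} and bounds the steps spent between distances $n^{a^i}$ and $n^{a^{i+1}}$ by the ratio of scale width to minimum decrement, then sums $O(\log\log n)$ such terms. Where you improve on the paper's write-up is in the bookkeeping it glosses over: you fix $a$ with $\delta<a<1$ so the final exponent comparison to $1-\frac{\delta}{\alpha-d}$ actually goes through, you keep the $1/d$ in the exponents (the paper conflates the grid diameter $n^{1/d}$ with $n$ and even writes $l$ where it means $n$), and you justify that monotonicity of the distance lets the steps partition cleanly among scales. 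Your potential-function coda, however, is a genuinely different and nicer route: with $\Phi=l^{1-\frac{1}{\alpha-d}}$, the tangent-line bound for the concave map $x\mapsto x^{1-\frac{1}{\alpha-d}}$ makes the exponents cancel exactly, so each greedy step decreases $\Phi$ by a constant and the step count is $O\bigl(n^{\frac{1}{d}(1-\frac{1}{\alpha-d})}\bigr)$. This eliminates the scale decomposition and the $\log\log n$ factor entirely and yields a strictly stronger bound than the theorem states (and than the paper proves); it is arguably the proof the theorem deserves, since the doubly-exponential scales are an artifact imported from the degree-bound arguments, where they are needed, rather than anything intrinsic to the routing analysis here.
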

\begin{proof}
By induction we can argue that starting from a node $v$, the worst case of our analysis of greedy routing is realized when the destination is the farthest node from $v$. By Lemma \ref{lem:lalpha:TravelBound} in each step if the distance to the furthest node is $l$, in the next step this distance will be at most $l-\epsilon l^\frac{1}{\alpha-d}$. Let us consider the number of iterations it takes to get from distance $n^{a^i}$ from destination to distance $l^{a^{i+1}}$ for some $a<1$. This is bounded by at most
$$\frac{l^{a^i}}{\epsilon l^\frac{a^{i+1}}{\alpha-d}}= \epsilon^{-1}l^{a^i-\frac{a^{i+1}}{\alpha-d}}.$$
Therefore, the total number of iterations is bounded by
$$\sum_{i=0}^{i\leq \log\log n} \epsilon^{-1}l^{a^i-\frac{a^{i+1}}{\alpha-d}} \leq \sum_{i=0}^{i\leq \log\log n} \epsilon^{-1}l^{1-\frac{a}{\alpha-d}} \leq \log \log (n) \epsilon^{-1}l^{1=\frac{a}{\alpha-d}}$$
\end{proof}

Here we show that for $d=1$, when $\alpha>d+1$ the degrees are restricted by a sub-logarithmic upper bound. Equivalent proof for higher dimensions remains a challenge.
\begin{theorem}
\label{disoneproof}
If $\alpha>d+1=2$ and a network $N$ is toggle-stable, then any node $v$ has at most $O(\log_{\frac{\alpha}{d+1}} \log(n))$ links.
\end{theorem}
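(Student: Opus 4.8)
The plan is to exploit the one-dimensional structure directly: since $d=1$ the grid is a ring, so I restrict attention to the links that $v$ maintains on one side of itself (the argument on the other side is symmetric and only doubles the final count). Write the distances of these one-sided links in increasing order as $p_1 < p_2 < \cdots < p_m$, and set $g_i := p_{i+1}-p_i$ for the gaps between consecutive links. The goal is to show that consecutive links must spread out so rapidly — essentially $p_{j+1}$ at least of order $p_j^{\alpha/2}$ — that only $O(\log_{\alpha/2}\log n)$ of them can fit before the distance exceeds $n$.

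The engine of the proof is a single toggle-stability inequality obtained by deleting an internal link. Fix $j$ with $2 \le j \le m-1$, so that $p_j$ has neighbours $p_{j-1}$ and $p_{j+1}$ on the same side. Removing the link to $p_j$ changes the separation-cost only for the nodes that had $p_j$ as their nearest neighbour, namely those lying strictly between the midpoints $(p_{j-1}+p_j)/2$ and $(p_j+p_{j+1})/2$; all of these lie between $p_{j-1}$ and $p_{j+1}$ and are therefore served, before and after, only by right-side links. A direct computation of the resulting ``tent'' of separation-costs (summing distances to the nearest surviving link, which on the grid equals an integral up to constant factors) shows that the total separation-cost increases by $\Theta(g_{j-1}g_j)$, while deleting the link saves exactly the link-cost $\beta p_j^\alpha$. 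Toggle-stability therefore forces
\[
g_{j-1}\,g_j \;\ge\; c\,p_j^{\alpha}
\]
for a positive constant $c$ depending on $\beta$. The decisive one-dimensional step is that the left gap is bounded by the distance itself, $g_{j-1}=p_j-p_{j-1}\le p_j$. Since at least one of the two gaps is $\ge \sqrt{c}\,p_j^{\alpha/2}$, and for $p_j$ above a fixed constant the left gap $g_{j-1}\le p_j$ is (using $\alpha>2$) too small to be that one, the right gap must satisfy $g_j \ge \sqrt{c}\,p_j^{\alpha/2}$, i.e. $p_{j+1}\ge \sqrt{c}\,p_j^{\alpha/2}$.

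With this recursion the count is routine. Taking logarithms, $\log p_{j+1} \ge (\alpha/2)\log p_j - O(1)$, so once $p_j$ exceeds a constant threshold the quantity $\log p_j$ is multiplied by essentially $\alpha/2$ at each step. Hence after $O(\log_{\alpha/2}\log n)$ internal links the distance already exceeds $n/2$, which is impossible on a grid of $n$ nodes. The links with $p_j$ below the constant threshold, together with the two boundary links $p_1$ and $p_m$ to which the internal argument does not apply, contribute only $O(1)$; doubling for the second side leaves the degree of $v$ bounded by $O(\log_{\alpha/2}\log n)=O(\log_{\frac{\alpha}{d+1}}\log n)$, as claimed. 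Note that this uses only toggle-stability, not the travel bound of Lemma~\ref{lem:lalpha:TravelBound}.

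I expect the main obstacle to be the bookkeeping of the separation-cost change: verifying that removing $p_j$ affects only the interval $(p_{j-1},p_{j+1})$, that the increase is $\Theta(g_{j-1}g_j)$ rather than something asymmetric in the two gaps, and that nodes near $v$ (whose nearest neighbour might otherwise switch to the other side or to $v$) do not spoil the estimate — all of which need a short but careful check, together with the discrete-to-continuous passage. This same step is precisely what is special to $d=1$: in higher dimensions the region served by a single link is no longer an interval flanked by two scalar gaps, so the clean product inequality $g_{j-1}g_j\ge c\,p_j^\alpha$ has no immediate analogue, which explains why the corresponding degree bound for $d>1$ remains open.
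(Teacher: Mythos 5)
Your proof is correct, but it takes a genuinely different route from the paper's. The paper partitions link distances into doubly-exponential bands $P(i)$ defined by $n^{a^{i+1}} < l \le n^{a^i}$ with $a=\frac{d+1}{\alpha}$, discards the extreme links of each band, and then runs a global charging argument per band: every surviving link in band $i$ costs at least $\beta n^{(d+1)a^i/d}$, while the total separation-cost ``budget'' of all nodes those links can possibly serve is at most $n^{a^i/d}\cdot (n^{a^i/d})^d$, forcing at most $\beta^{-1}$ links per band, times $O(\log_{\alpha/(d+1)}\log n)$ bands. You instead derive a local, per-link inequality: deleting one internal link costs $\beta p_j^\alpha$ in savings against a separation-cost increase of at most $O(g_{j-1}g_j)$, so toggle-stability forces $g_{j-1}g_j \ge c\,p_j^\alpha$, and combined with the one-dimensional fact $g_{j-1}\le p_j$ this yields the growth recursion $p_{j+1}\gtrsim p_j^{\alpha/2}$ and hence the doubly-logarithmic count directly. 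Both proofs are powered by the same toggle-stability inequality and the same $d=1$ structural fact (a non-extreme link serves only an interval trapped between its neighbors --- exactly the step the paper performs when it removes ``the elements that are furthest from $v$ on each side''), but your version extracts strictly more structure (consecutive link distances grow doubly exponentially) and makes the obstruction to $d>1$ explicit, whereas the paper's band scheme has the advantage of mirroring its $\alpha<d+1$ degree analysis and of being phrased in a nominally dimension-generic way. One refinement to your bookkeeping worry: the increase upon deleting $p_j$ need not be $\Theta(g_{j-1}g_j)$ on the ring, because the set of nodes actually served by $p_j$ can be smaller than the interval between midpoints (a far node in that interval may be closer, via wrap-around, to a link on the other side of $v$), and the displaced nodes may find closer alternatives than $p_{j-1},p_{j+1}$; but this interference only \emph{decreases} the increase, and the direction your argument needs is precisely the upper bound (increase $\le O(g_{j-1}g_j)$), which holds unconditionally since $p_{j-1}$ and $p_{j+1}$ remain in $N(v)$ after the deletion. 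So the estimate you flagged as the main obstacle does go through, and only in the harmless direction.
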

\begin{proof}
Think of each node of distance $l$ from $v$ to initially have a budget of $l$ of separation-cost that can be spent
to justify maintenance of a link. We will show that if $v$ were to have more than $c\log_{\frac{\alpha}{d+1}} \log(n)$ links,
then there would be a set of nodes that would have insufficient budget in order to justify a set of links that serve a subset of them,
and therefore, $N$ would not be toggle-stable.

Consider a node $v$ and let $P(i)$ be the set of nodes with distance $l'$
from $v$ such that $n^\frac{a^{i+1}}{d}< l' \leq n^\frac{a^i}{d}$ where $a = \frac{d+1}{\alpha}$. 
We show that the size of $P(i) \cap N(v)$ is bounded by a constant.
For this, note that any link that $v$ has in $P(i)$ costs at least
$\beta n^\frac{\alpha a^{i+1}}{d}$.

Noting that $d=1$, let $S$ be the set $P(i) \cap N(v)$ minus the elements that are furthest from $v$ on each side, and hence $|S|\geq |P(i)\cap N(v)|-2$.
No node in $S$ can serve a node of distance larger than $n^\frac{a^i}{d}$ from $v$, and by Item \ref{lem:item:DisLeqk} of Lemma \ref{lem:basiclist}
there are at most ${n^\frac{a^i}{d}}^d$ nodes that can contribute to justify $S$,
and each such a node has a budget of at most $n^\frac{a^i}{d}$. Therefore, we have
$$|S| \leq \frac{n^{\frac{a^i}{d}^{d+1}}}{\beta n^\frac{\alpha a^{i+1}}{d}} = \frac{n^{ \frac{(d+1)^{i+1}}{d\alpha^i}}}{\beta n^\frac{ (d+1)^{i+1}}{d\alpha^i}} = \beta^{-1}$$.
\end{proof}
We conjecture that the bound from Theorem \ref{disoneproof} can be extended to $d>1$.

\section{Discussion}
\label{sec:conclusion}
Our main contribution is a deterministic framework for distributed network creation that respects certain theoretical constraints that are demonstrably present in Milgram's experiment.

While the questions that we are concerned with here are inherently inspired by nature,
the results were achieved during a purely mathematical mission to reestablish the logarithmic bounds on maximum degree and routing diameter in a setting where individuals have only local information. However, the model that emerges may give us insight into drives that connect us.

Imagine a population distributed over a plane, where each individual maintains the status (say current address)
of some of the others in order to gain some privilege through them.
There is a real effort here, which is mainly what is required to maintain the status of another,
and there is a sort of anxiety, a pull, an implicit drive that may be described as a desire to have connections everywhere.

Our results suggest that such a population forms a small-world network.


This research also provides us with a pattern for design of robust networks. While they only require limited communication capacities, these networks are highly efficient and simultaneously survivable. It would be interesting to see if in practice they can be used to connect our swarms, e.g., cars on roads.

It must be noted that our cost function can be modified in a number of ways while still yielding networks with small-world properties. For instance, if each node $v$ chooses the set $N(v)$ of those whose status it watches for by minimizing:
\begin{align}
\label{fml:weigthedcost}
c_N(v) = \beta\sum_{u\in N(v)} d^\alpha(v,u) + \sum_{u\notin N(v)}w_v(u)d(u,N(v))
\end{align}
where $w_v(u)$ represents the relative importance of $u$ for $v$, then some degree of small-world properties are established as a side effect depending on parameters $\alpha$ and $\beta$, and the correlation between relative importance and proximity.

At this point, many directions remain unexplored. Particularly, it is curious as to what happens to the threshold for other values of $\beta$. When we set $\beta=n$, our cost function closely resembles the objective function upon which the model of Even-Dar and Kearns \cite{Even-DarK06} is based. Although their model is inherently different, a similar threshold with features of its own appears there. All of this hints towards a bigger picture, perhaps a more general framework that is yet to be discovered.

\section*{Acknowledgement}
I am thankful to Jon Kleinberg for his instructions, and to anonymous reviewers of PODC 16 and SODA 18 for their helpful comments.

\newpage
\bibliographystyle{abbrv}
\bibliography{Social_network_formation}

\begin{thebibliography}{1}

\bibitem{Clauset2003}
A.~Clauset and C.~Moore.
\newblock How do networks become navigable?, 2003.

\bibitem{Even-DarK06}
E.~Even{-}Dar and M.~J. Kearns.
\newblock A small world threshold for economic network formation.
\newblock In {\em Advances in Neural Information Processing Systems 19}, pages
  385--392. MIT Press, 2007.

\bibitem{Kleinberg:2000:nature}
J.~Kleinberg.
\newblock Navigation in a small world.
\newblock {\em Nature}, 406(6798):845--845, 2000.

\bibitem{Kleinberg:2000:STOC}
J.~Kleinberg.
\newblock The small-world phenomenon: An algorithmic perspective.
\newblock In {\em Proceedings of the Thirty-second Annual ACM Symposium on
  Theory of Computing}, pages 163--170, New York, NY, USA, 2000. ACM.

\bibitem{milgram67smallworld}
S.~Milgram.
\newblock {The Small World Problem}.
\newblock {\em Psychology Today}, 2:60--67, 1967.

\bibitem{Travers69anexperimental}
J.~Travers and S.~Milgram.
\newblock An experimental study of the small world problem.
\newblock {\em Sociometry}, 32:425--443, 1969.

\bibitem{watts1998cds}
D.~J. Watts and S.~H. Strogatz.
\newblock {Collective dynamics of'small-world'networks.}
\newblock {\em Nature}, 393(6684):409--10, 1998.

\end{thebibliography}

\end{document}